\newcommand{\setword}[2]{%
  \phantomsection
  #1\def\@currentlabel{\unexpanded{#1}}\label{#2}%
}
\newcommand{\Tr}{\operatorname{Tr}}
\newcommand{\be}{\begin{equation}}
\newcommand{\ee}{\end{equation}}
\newcommand{\ba}{\begin{eqnarray}}
\newcommand{\ea}{\end{eqnarray}}
\newcommand{\ketbra}[2]{|#1\rangle \langle #2|}
\newcommand{\tr}{\operatorname{Tr}}
\newtheorem{theorem}{Theorem}
\newtheorem{corollary}{Corollary}
\def\>{\rangle}
\def\<{\langle}
\begin{document}

\title{Advantage of Hardy's Nonlocal Correlation in Reverse Zero-Error Channel Coding}

\author{Mir Alimuddin}
\affiliation{Department of Physics of Complex Systems, S. N. Bose National Center for Basic Sciences, Block JD, Sector III, Salt Lake, Kolkata 700106, India.}
\author{Ananya Chakraborty}
\affiliation{Department of Physics of Complex Systems, S. N. Bose National Center for Basic Sciences, Block JD, Sector III, Salt Lake, Kolkata 700106, India.}

\author{Govind Lal Sidhardh}
\affiliation{Department of Physics of Complex Systems, S. N. Bose National Center for Basic Sciences, Block JD, Sector III, Salt Lake, Kolkata 700106, India.}

\author{Ram Krishna Patra}
\affiliation{Department of Physics of Complex Systems, S. N. Bose National Center for Basic Sciences, Block JD, Sector III, Salt Lake, Kolkata 700106, India.}

\author{Samrat Sen}
\affiliation{Department of Physics of Complex Systems, S. N. Bose National Center for Basic Sciences, Block JD, Sector III, Salt Lake, Kolkata 700106, India.}

\author{Snehasish Roy Chowdhury}
\affiliation{Physics and Applied Mathematics Unit, Indian Statistical Institute, 203 BT Road, Kolkata, India.}

\author{Sahil Gopalkrishna Naik}
\affiliation{Department of Physics of Complex Systems, S. N. Bose National Center for Basic Sciences, Block JD, Sector III, Salt Lake, Kolkata 700106, India.}

\author{Manik Banik}
\affiliation{Department of Physics of Complex Systems, S. N. Bose National Center for Basic Sciences, Block JD, Sector III, Salt Lake, Kolkata 700106, India.}

\begin{abstract}
Hardy's argument constitutes an elegant proof of quantum nonlocality. In this work, we report an exotic application of Hardy's nonlocal correlations in two-party communication setup. We come up with a task, wherein a positive payoff can be through an $1$ bit of communication from the sender to the receiver if and only if the communication channel is assisted with a no-signaling correlation exhibiting Hardy's nonlocality. This further prompts us to establish a counter-intuitive result in correlation assisted reverse zero-error channel coding scenario, where the aim is to simulate a higher input-output noisy classical channel by a lower input-output noiseless one in assistance with preshared correlations. We show that there exist such reverse zero-error channel simulation tasks where non-maximally entangled states are preferable over the assistance with a maximally entangled state, even when the former states carry an arbitrarily small amount of entanglement. Our work thus establishes that within the operational paradigm of local operations and {\it limited} classical communication the structure of entangled resources is even more complex to characterize.   
\end{abstract}


\maketitle	
\section{Introduction} 
The pioneering work of J. S. Bell establishes one of the most striking departures of quantum theory from the profound classical worldview of {\it local-realism} \cite{Bell1966} (see also \cite{Mermin1993,Brunner2014}). Violation of a Bell-type inequality, as demonstrated in several milestone experiments \cite{Clauser1969,Freedman1972,Aspect1981,Aspect1982,Zukowski1993,Weihs1998}, endorses nonlocal nature of the quantum world. Apart from its foundational implications, Bell nonlocality has also been identified as a useful resource for several practical tasks, such as device-independent cryptography \cite{Ekert1991,Barrett2005,Vazirani2014}, reduction of communication complexity \cite{Buhrman2010}, and device-independent randomness certification \& amplification \cite{Pironio2010,Cavalcanti2012,Chaturvedi2015,Colbeck2012}.

A quite popular technique, known by the name 'nonlocality without inequality' proof, is often used to establish the nonlocal behaviour of quantum theory. Unlike the Bell-type inequalities, where statistics of many events are collected, these proofs focus on a single event whose occurrence shows the incompatibility of quantum theory with the notion of local-realism. While the first proof of this kind for tripartite quantum systems is due to Greenberger-Horne-Zeilinger \cite{Greenberger1989} (see also \cite{Greenberger1990}), for bipartite systems, such proof was first proposed by Hardy \cite{Hardy1992}, which is considered to be ``simpler and more compelling than the arguments that underlie the derivation of Bell-CH inequality" \cite{Mermin1994,Clauser1974}. More recently, Hardy-type nonlocality proofs have also been shown to be useful in several practical tasks \cite{Das2013,Mukherjee2015,Li2015,Ramanathan2018,Rai2021,Rai2022}.

In this work, we report a novel application of Hardy's nonlocal correlation in the simplest communication scenario. We show that Hardy's nonlocal correlation shared between two distant parties can empower the communication utility of a perfect classical channel. This is quite striking, as such a correlation by itself cannot be used for information transfer, which otherwise will imply violation of the no signalling (NS) principle. We also argue that the advantage reported in this work is different than the advantage of nonlocal correlations known in  communication complexity tasks \cite{Buhrman2010}.   

We consider a guessing game played between two distant players -- a sender and a receiver. First, we show that the expected collaborative payoff of this game cannot be positive whenever only $1$-bit classical communication is allowed from the sender to the receiver, who otherwise can share an unlimited amount of classical correlation between them. Interestingly, assistance of Hardy's nonlocal correlation to the same classical channel can ensure a strictly positive payoff, establishing a nontrivial advantage of Hardy's correlation in communication tasks. The advantage can be better understood in the framework of correlation-assisted reverse zero-error coding scenario \cite{Cubitt2011}, where the aim is to simulate a higher input-output noisy classical channel by a lower input-output identity channel in the presence of NS correlations. In this scenario, the aforesaid game makes Hardy's nonlocal correlations special as we show that among all the $2$-input-$2$-output NS correlations only those exhibiting Hardy's nonlocality can ensure a positive payoff. This further motivates us to show that pure entangled states which are not maximally entangled are preferable over the maximally entangled one for simulating certain noisy classical channels. At the end we show that similar sort of results can also be obtained by considering a generalization of Hardy's nonlocality argument as proposed by Cabello \cite{Cabello2002}. The present work, therefore, establishes that the comparison of entanglement in quantum states even for bipartite systems are quite complex when classical communication among distant parties are treated as costly.

The manuscript is organized as follows. In Section \ref{game} we introduce a two-party guessing game which we call the Distributed Mine-Hunting game. Here we also discuss the reverse zero-error channel coding setup. In Section \ref{result} we present our main results, and in Section \ref{discussion} we discuss implications of our results along with future outlooks. 

\section{A two-party guessing game}\label{game}
The game involves two distant players, Alice and Bob, and a referee named Charlie. In each run of the game, Charlie provides four closed boxes numbered 1 to 4 to Bob, who has to open one of these boxes. Some of the boxes contain a bomb that will explode upon opening it. Among the boxes that do not contain the bomb, some are empty, some contain a dollar bill, and others may even prompt Bob to pay a dollar bill to Charlie. In each run of the game, Charlie uniformly and randomly picks one of four different arrangements of these boxes, as shown in Fig.\ref{fig1}. Charlie then informs Alice about the arrangement of the boxes in that particular run, and Alice tries to help Bob in choosing a box. However, only $1$-bit of classical communication is allowed from Alice to Bob, which may be further assisted with NS correlations shared between them a priori. From now on, we will refer to this as the Distributed Mine-Hunting (DMH) game.
\begin{figure}[t!]
\centering
\includegraphics[width=0.45\textwidth]{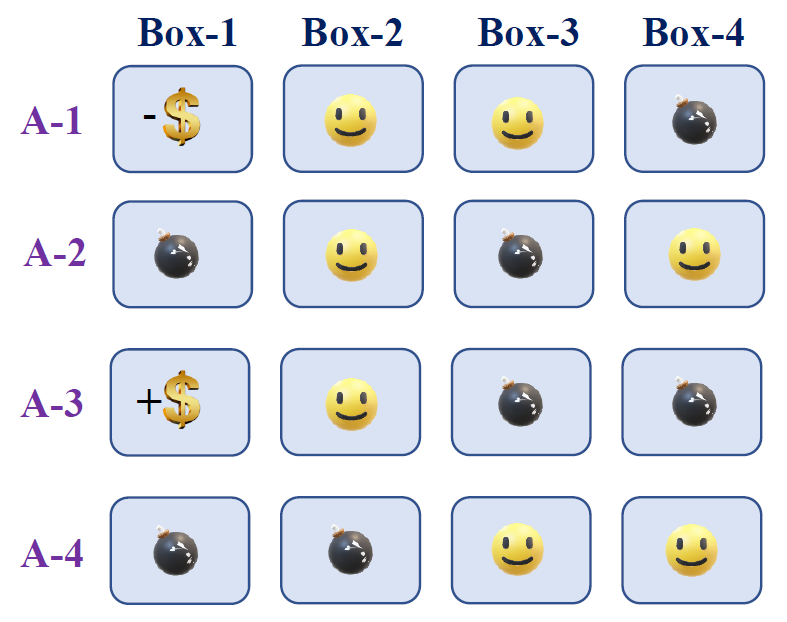}
\caption{Distributed mine-hunting (DMH) game. Opening a box with $`+\$'$ assures dollar bill gain for Bob, whereas a box with $`-\$'$ demands him to pay a dollar bill to Charlie. A box with a `smile' neither offers nor demands any dollar bill, but a box with a `bomb' turns out to be fatal to Bob. Alice knows which of the arrangements $\{$A-$1$,$\cdots$,A-$4$$\}$ Charlie chooses in a particular run and tries with $1$-bit classical channel to help Bob to optimize his expected dollar gain. NS correlation can be used as assistance to the channel.}
\label{fig1}
\end{figure}

\subsection*{General Scenario} 
The aforesaid game can be formally studied within a more generic setup as considered in the reverse zero-error channel coding scenario \cite{Cubitt2011}. In zero-error communication scenario the core goal is to characterize the ability of noisy classical channels to transmit classical information with zero probability of error \cite{Shannon1956} (see also \cite{Korner1998}). A channel from Alice to Bob, with input $m \in \mathcal{M}$ at Alice's end and output $z \in \mathcal{Z}$ at Bob's end, can be represented as a matrix $S \equiv (s_{mz})$, where $s_{mz}$ denotes the probability of producing the output $z \in \mathcal{Z}$ by Bob given that Alice receives the message $m \in \mathcal{M}$. The cardinality of $\mathcal{M}$ and $\mathcal{Z}$ is referred to as the input and output dimensions of the channel, respectively. The direct or forward zero-error coding theorem tries to find the maximum number of distinct input alphabets that can be sent perfectly from Alice to Bob through a noisy channel $S$. In other words, the aim is to find the largest dimensional identity channel ({\it i.e.}, a noise-less channel) that can be simulated by the given channel $S$. The reverse problem, on the other hand, aims to simulate a higher dimensional noisy channel with the help of a lower dimensional identity channel. While both the direct and reverse problems can be studied in the single-shot setup as well as in the asymptotic limit, here we will restrict our study to the reverse problem in the single-shot setup. Interestingly, nonlocal correlations arising from entangled quantum states can provide a nontrivial advantage in the reverse zero-error coding scenario \cite{Cubitt2011}. The present work, however, reports quite an exotic behavior of entanglement by establishing the precedence of non-maximally entangled states over the maximal one in some instances of  reverse zero-error channel coding problem.     
  
Within the aforesaid notation, a game (such as DMH) is entirely specified by the payoff matrix $\mathcal{G}\equiv(g_{mz})$, where $g_{mz}\in\overline{\mathbb{R}}$ is the reward/payoff given when Bob produced the index `$z$' provided Alice received the message `$m$'. For instance, the DMH game is specified by the following payoff matrix:
\begin{align}
\mathcal{G}_{DMH}	
\equiv
\begingroup
\setlength{\tabcolsep}{1pt} 
\renewcommand{\arraystretch}{1} 
\begin{array}{c||c|c|c|c|} 
 \mathcal{M}\textbackslash\mathcal{Z}& 1 & 2 & 3 & 4\\ \hline\hline
1& -1 & 0 & 0 & -\infty\\ \hline
2& -\infty & 0 & -\infty & 0\\ \hline
3& +1 & 0 & -\infty & -\infty\\ \hline
4& -\infty & -\infty & 0 & 0\\ \hline
\end{array}
\endgroup
\label{DMH-payoff}
\end{align}
Payoffs in (\ref{DMH-payoff}) quantitatively capture the scenario of the DMH game. A reward of $-\infty$ for the box containing the bomb captures the notion that choosing such a box must be avoided at all costs \cite{Self0}. The reward $0$ corresponds to the event where the players survive but do not receive any reward. Events with reward $+1~(-1)$ correspond to the scenario where the players receive (pay) some dollar bill from (to) Charlie. The game matrix and the sampling distribution of Alice's inputs are common knowledge to the players.

Alice and Bob are cooperative in nature and aim to maximize the payoff. Their collaborative strategy depends on the available resources, which can be broadly categorized into two types: (i) correlation shared between them before the game starts and (ii)  communication from Alice to Bob. Any strategy employed by the players can be represented as an input-output channel $S\equiv(s_{mz})$. Given such a strategy matrix $S$, the average payoff can be obtained as
\begin{align}\label{pay}
\mathcal{P}(S)=\sum_{z,m} p(m)~g_{mz}~s_{mz}~.
\end{align}
As it is evident, there will always be a perfect strategy for such a game if $\log_2|\mathcal{M}|$ bits of communication are allowed from Alice to Bob. Interesting situations arise when communication is limited, which can further be aided by preshared correlations of different kinds. 

\section{Results}\label{result} 
Let $\Omega_{n_c+SR}(|\mathcal{M}|,|\mathcal{Z}|)$ denote the set of strategy matrices obtained when $n$-bits of classical communication and an unlimited amount of shared randomness are available. The set $\Omega_{n_c+SR}(|\mathcal{M}|,|\mathcal{Z}|)$ forms a polytope with extreme points $S^{e}$'s representing strategy matrices obtained through deterministic encoding $\mathbb{E}:\mathcal{M}\to\{0,1\}^n$ at Alice's end and deterministic decoding $\mathbb{D}:\{0,1\}^n\to\mathcal{Z}$ at Bob's end \cite{Frenkel2015} (see also \cite{Dallarno2017}). Our first technical result is to limit the optimal success probability of the game in Eq.(\ref{DMH-payoff}) for classical strategies with limited communication.
\begin{figure}[t!]
\centering
\includegraphics[width=0.45\textwidth]{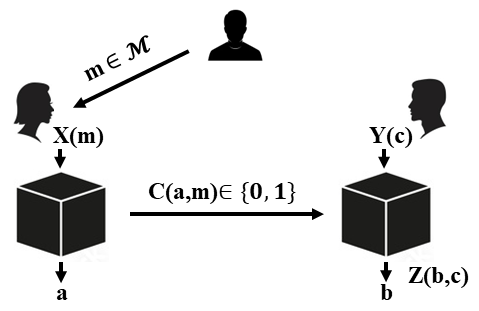}
\caption{General strategy to play a game $\mathcal{G}$ when the $1$-cbit communication channel is assisted with NS correlation. Alice computes the input $x=X(m)$ to her part of the nonlocal box based on the message $m\in\mathcal{M}$ received from Referee. The output $a$ of the nonlocal box at her end and the message $m$ determine the classical bit $c=C(a,m)$ sent to Bob. Bob then inputs $y=Y(c)$ into his end of the NS box obtaining the output $b$. Finally, he generates his guess as $z=Z(b,c)$. }
\label{fig2}
\end{figure}
 \begin{theorem}\label{theo1} 
The average payoff of the DMH game is upper bounded by zero while following a strategy from the set $\Omega_{1_c+SR}(4,4)$, {\it i.e.}, $\mathcal{P}(S)\le0,~\forall~S\in\Omega_{1_c+SR}(4,4)$.
 \end{theorem}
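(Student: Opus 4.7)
The plan is to exploit the polytope structure of $\Omega_{1_c+SR}(4,4)$. Since the payoff (\ref{pay}) is linear in the strategy matrix $S$, its maximum over the polytope is attained at an extreme point, which, by the paper's earlier remark, corresponds to a deterministic encoding--decoding pair. So it suffices to show $\mathcal{P}(S^e)\le 0$ for every deterministic strategy $S^e$ induced by a partition $\{A_0,A_1\}$ of $\{1,2,3,4\}$ (from $\mathbb{E}:\mathcal{M}\to\{0,1\}$) together with output choices $z_0,z_1\in\{1,2,3,4\}$ (from $\mathbb{D}:\{0,1\}\to\mathcal{Z}$). Under such a strategy every $m\in A_c$ is deterministically mapped to $z_c$.

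Next, I would isolate the ``safe'' strategies, i.e.\ those avoiding all $-\infty$ entries, since any non-safe strategy has $\mathcal{P}=-\infty\le 0$ trivially. Reading off the columns of $\mathcal{G}_{DMH}$, the safe input sets for the four possible outputs are
\begin{align*}
\mathrm{safe}(1)&=\{1,3\}, & \mathrm{safe}(2)&=\{1,2,3\},\\
\mathrm{safe}(3)&=\{1,4\}, & \mathrm{safe}(4)&=\{2,4\}.
\end{align*}
A deterministic strategy is safe iff $A_0\subseteq\mathrm{safe}(z_0)$ and $A_1\subseteq\mathrm{safe}(z_1)$, which in particular forces $\{A_0,A_1\}$ to be a partition of $\{1,2,3,4\}$ by a pair of these four sets.

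The crucial observation is that the only positive entry of $\mathcal{G}_{DMH}$ is $g_{3,1}=+1$, so a strictly positive payoff requires message $3$ to be decoded as output $1$. Without loss of generality $z_0=1$, which immediately gives $A_0\subseteq\{1,3\}$ and $3\in A_0$. I then split into two sub-cases. If $A_0=\{3\}$, the complement $A_1=\{1,2,4\}$ is not contained in any $\mathrm{safe}(z_1)$, so the strategy is not safe. If $A_0=\{1,3\}$, then $A_1=\{2,4\}$ forces $z_1=4$ (the only safe choice), and the payoff reads $\tfrac14(g_{1,1}+g_{3,1}+g_{2,4}+g_{4,4})=\tfrac14(-1+1+0+0)=0$: the $+1$ gain from $m=3$ is exactly cancelled by the $-1$ loss from $m=1$, which is dragged into the same encoding class by the need to avoid bombs. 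Every other safe deterministic strategy never uses the entry $g_{3,1}$, so it only visits entries in $\{0,-1\}$ and yields payoff $\le 0$. Combining with the polytope/linearity argument finishes the proof.

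The main technical content is thus the bomb-induced coupling between messages $1$ and $3$: any safe $1$-bit encoding that sends $3\mapsto 1$ is forced to also send $1\mapsto 1$. The case analysis itself is short and finite; the only thing to be careful about is exhausting all pairs of safe sets that cover $\{1,2,3,4\}$, but the argument above shows that we need only inspect the ones consistent with $z_c=1$ for some $c$, the rest being automatically non-positive.
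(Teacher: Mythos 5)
Your proof is correct and follows essentially the same route as the paper: reduce to deterministic encode--decode vertices by linearity, discard the bomb-triggering ones, and check that the survivors give payoff at most zero. The only difference is organizational --- you key in on the unique $+1$ entry $g_{3,1}$ to avoid enumerating all five safe vertices (which the paper lists explicitly), and your observation that avoiding the bombs forces messages $1$ and $3$ into the same encoding class, cancelling the $+1$ against the $-1$, is exactly the mechanism at work in the paper's vertex $S^e_{s1}$.
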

 \begin{proof}
Since average payoff depends linearly on the strategy [see Eq.(\ref{pay})], the polytope structure of $\Omega_{1_c+SR}(4,4)$ ensures that maximum payoff will be attained at one of its vertices. For $\Omega_{n_c+SR}(|\mathcal{M}|,|\mathcal{Z}|)$, the number of vertices $N$ can be calculated using the formula \cite{Dallarno2017}: 
\begin{align}
N=\sum^{2^n}_{k=1}&k!
\left( {\begin{array}{c}
    |\mathcal{Z}| \\
   k 
  \end{array} } \right) \left\{ {\begin{array}{c}
    |\mathcal{M}| \\
   k \\
  \end{array} } \right\}, \\  
  \left( {\begin{array}{c}
    |\mathcal{Z}| \\
   k 
  \end{array} } \right)&:=\frac{|\mathcal{Z}|!}{k!(|\mathcal{Z}|-k)!},\nonumber\\
  \left\{ {\begin{array}{c}
    |\mathcal{M}| \\
   k \\
  \end{array} } \right\}&:=\sum^{k}_{j=0}\frac{1}{k!}(-1)^{k-j}  \left( {\begin{array}{c}
k \\
   j 
  \end{array} } \right) j^{|\mathcal{M}|}.\nonumber
\end{align}
Note that the players' first priority is to avoid the bomb at any cost, {\it i.e.}, the $-\infty$ reward in Eq.(\ref{DMH-payoff}). This will exclude some vertices. For instance, consider the extreme strategy matrix,     
\begin{align}\label{cm1}
S^e_{ns}:=\begin{pmatrix}
     1 & 0 & 0 & 0\\
     0 & 0 & 1 & 0\\
     1 & 0 & 0 & 0\\
     0 & 0 & 1 & 0
\end{pmatrix}.
\end{align}
Comparing with Eq.(\ref{DMH-payoff}), it is evident that in this strategy the bomb will be triggered with a nonzero probability (for $m=2$). Among all possible vertices, only in the following five cases 
\begin{align}
\left\{\begin{aligned}
S^e_{s1}:=\begin{pmatrix}
     1 & 0 & 0 & 0\\
     0 & 0 & 0 & 1\\
     1 & 0 & 0 & 0\\
     0 & 0 & 0 & 1
     \end{pmatrix},~~
S^e_{s2}:=\begin{pmatrix}
     0 & 1 & 0 & 0\\
     0 & 1 & 0 & 0\\
     0 & 1 & 0 & 0\\
     0 & 0 & 1 & 0
     \end{pmatrix},\\
S^e_{s3}:=\begin{pmatrix}
     0 & 1 & 0 & 0\\
     0 & 1 & 0 & 0\\
     0 & 1 & 0 & 0\\
     0 & 0 & 0 & 1
     \end{pmatrix},~~
S^e_{s4}:=\begin{pmatrix}
     0 & 1 & 0 & 0\\
     0 & 0 & 0 & 1\\
     0 & 1 & 0 & 0\\
     0 & 0 & 0 & 1
     \end{pmatrix},\\
S^e_{s5}:=\begin{pmatrix}
     0 & 0 & 1 & 0\\
     0 & 1 & 0 & 0\\
     0 & 1 & 0 & 0\\
     0 & 0 & 1 & 0
     \end{pmatrix}~~~~~~~~~~~~~~~~
\end{aligned}\right\}
\end{align}
the bomb will never be triggered. However, we have $\mathcal{P}(S^e_{sK})=0,~\forall~K\in\{1,\cdots,5\}$. Since all the deterministic strategies result in either zero or $-\infty$ payoff,  the optimal payoff with $1$ bit classical communication and shared randomness is simply zero. This completes the claim. 
\end{proof}
We now consider the scenario where the communication line from Alice to Bob is aided with a generic NS correlation $\mathbb{P}\equiv\{p(a,b|x,y)\}$, where $p(a,b|x,y)\ge0,~\forall~a,b,x,y~\&~\sum_{a,b}p(a,b|x,y)=1,~\forall~x,y$. Here $p(a,b|x,y)$ denotes the probability of obtaining outcome $a\in\mathcal{A}$ at Alice's end and $b\in\mathcal{B}$ at Bob's end for their respective inputs $x\in\mathcal{X}$ and $y\in\mathcal{Y}$. Classical correlations that allow a local-realistic description, $p(a,b|x,y)=\int_{\Lambda}\mu(\lambda)p(a|x,\lambda)p(b|y,\lambda)d\lambda$, forms a strict subset (a sub polytope) of the NS polytope; here $\lambda\in\Lambda$ is some classical variable shared between Alice and Bob and $\mu(\lambda)$ is a probability distribution on $\Lambda$ \cite{Brunner2014}. Quite surprisingly, entangled quantum states can lead to correlations that are not local-realistic and nonlocality of those correlations can be certified through violation of some Bell-type inequalities \cite{Bell1966}. Given such an NS correlation (possibly nonlocal) as an assistance to $1$-bit classical communication from Alice to Bob, the general strategy to play a game $\mathcal{G}$ is described in Fig.\ref{fig2}.  

For the binary input-output case, Lucien Hardy proposed an elegant argument according to which any NS correlation $\mathbb{H}\equiv\{h(a,b|x,y)\}$ satisfying the constraints
\begin{align}
\left\{\begin{aligned}
h(0,0|0,0):=h_0>0,~~h(0,1|0,1):=h_5=0,\\
h(1,0|1,0):=h_{10}=0,~~h(0,0|1,1):=h_3=0,\\
\mbox{with},~h(a,b|x,y):=h_{a\times2^3+b\times2^2+x\times2^1+y\times2^0}
\end{aligned}\right\}\label{Hardy}
\end{align}
must be nonlocal in nature \cite{Hardy1992} (for the sake of completeness we analyze the argument in Appendix-\ref{appen-a}). A correlation exhibiting Hardy's nonlocality is termed as Hardy's nonlocal correlation. Our next result proves a nontrivial advantage of Hardy correlation while playing the DMH game. 
\begin{theorem}\label{theo2}
A strictly positive average payoff in the DMH game is achievable with $1$-bit perfect classical channel from Alice to Bob when the channel is assisted with a $2$-input-$2$-output Hardy's nonlocal correlation. 
\end{theorem}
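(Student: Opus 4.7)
The plan is constructive: I will exhibit an explicit joint protocol $(X, C, Y, Z)$ and verify that it achieves $\mathcal{P}(S) > 0$. Here on input $m$ Alice feeds $x = X(m)$ to her half of the Hardy box, observes outcome $a$, and transmits $c = C(m, a)$ on the classical channel; Bob, on receiving $c$, sets $y = Y(c)$, observes $b$, and outputs $z = Z(b, c)$, exactly as in Fig.~\ref{fig2}.

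First, I would choose $(X, C, Y, Z)$ so that each of the seven $-\infty$ entries of $\mathcal{G}_{DMH}$ corresponds only to a zero-probability Hardy event. Expanding the induced strategy matrix,
\[
s_{m,z} \;=\; \sum_{(a,b)\,:\,Z(b,C(m,a)) = z} p\bigl(a,\, b \,\big|\, X(m),\, Y(C(m,a))\bigr),
\]
the requirement $s_{m,z} = 0$ for every bomb cell $(m,z)$ reduces to demanding that every summand be annulled by one of the three Hardy zeros in Eq.~(\ref{Hardy}), namely $h_5 = h_{10} = h_3 = 0$. Because only three zeros must cover seven bomb cells, several bombs will share the same zero; arranging the routing is a finite case check over $(m, a)$.

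Second, I would compute the expected payoff. Once the bomb cells are neutralized, only the rewards $\pm 1$ survive, located at $(m=3,z=1)$ and $(m=1,z=1)$. The protocol is engineered so that the $+1$ cell is reached via the paradoxical Hardy event $h_0 = h(0,0\mid 0,0) > 0$ with probability of order $h_0$, whereas the $-1$ cell is either ruled out by a Hardy zero or reached with strictly smaller probability. Substituting into Eq.~(\ref{pay}), the contributions combine into a strictly positive multiple of $h_0$, establishing $\mathcal{P}(S) > 0$.

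The main obstacle is the no-signalling (NS) lock. If Alice's bit $c$ depended on $m$ alone, NS would force $\sum_a p(a, b\mid X(m), y)$ to be independent of $X(m)$, and then the contributions of $m=1$ and $m=3$ to the payoff would cancel exactly, leaving $\mathcal{P}(S) = 0$ as in the classical case of Theorem~\ref{theo1}. Hence $C$ must genuinely depend on $a$, and it is precisely the $a$-conditional distributions $p(b \mid X(m), y, a)$---which do vary with $X(m)$---that supply the asymmetry. Coordinating this $a$-dependence so that simultaneously (i) the $+1$ event is reached through the $h_0$-weighted Hardy event and (ii) all seven bombs are covered by the three available Hardy zeros is the technical heart of the proof; the DMH payoff matrix has been designed precisely so that such a covering exists.
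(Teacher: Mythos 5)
Your proposal takes the same constructive route as the paper (choose deterministic $X,C,Y,Z$ in the Fig.~\ref{fig2} framework, kill every bomb cell, and let the Hardy event $h_0$ supply the positive gain), but as written it is a plan rather than a proof. The theorem is an existence claim, and the existence of an assignment $(X,C,Y,Z)$ that simultaneously (i) annihilates all seven $-\infty$ cells and (ii) routes the $+1$ cell through $h_0$ with the $-1$ cell contributing strictly less, is precisely what must be exhibited; you defer it to an unperformed ``finite case check'' and to the assertion that ``the protocol is engineered so that\dots''. Without the explicit functions and the resulting strategy matrix, nothing has been verified, so the heart of the argument is missing. There is also a mild misconception in your covering picture: it is not necessary (and would be hard) to cover all seven bombs with the three Hardy zeros. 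In the working protocol four of the bombs are excluded purely by the decoder's range, because for those messages Alice's bit does not depend on $a$, so Bob's output is confined to two safe boxes; only three bombs need the zeros $h_5,h_{10},h_3$, one each.

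For the record, the gap is fillable by the following explicit choice (the paper's): $X(1)=X(3)=0$, $X(2)=X(4)=1$; $C(m,a)=0$ iff $(m,a)\in\{(1,1),(2,1),(3,0),(3,1)\}$ and $C(m,a)=1$ otherwise; $Y(c)=c$; and $Z(b,c)$ given by $(0,0)\mapsto1$, $(0,1)\mapsto2$, $(1,0)\mapsto3$, $(1,1)\mapsto4$. One then computes the strategy matrix row by row: for $m=1$ the only dangerous cell $(1,4)$ carries probability $h_5=0$; for $m=2$ the cells $(2,1)$ and $(2,3)$ carry $h_{10}=0$ and $h_3=0$; for $m=3$ and $m=4$ Alice's bit is constant in $a$, so Bob never outputs boxes $3,4$ (resp.\ $1,2$), eliminating the remaining four bombs. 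The surviving $\pm1$ cells give $s_{31}=h_0+h_8$ and $s_{11}=h_8$, so $\mathcal{P}(S_{\mathbb{H}})=\tfrac14\left[(h_0+h_8)-h_8\right]=\tfrac14 h_0>0$. Your concluding remark about the no-signalling ``lock'' (that $C$ must genuinely depend on $a$, since otherwise the marginal of $b$ is independent of $x$ and the strategy collapses to $\Omega_{1_c+SR}(4,4)$, hence payoff $\le 0$ by Theorem~\ref{theo1}) is correct and is a nice sanity check, but it does not substitute for producing and verifying the witness protocol.
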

\begin{proof}
Consider the following strategy by Alice and Bob.
{\bf Alice's action:}\\
$\bullet$ Depending on $m\in\mathcal{M}$ Alice computes her input in the NS box. For $m\in\{1,3\}$ she chooses $x=0$, otherwise she choose $x=1$.\\
$\bullet$ Based on the tuple $(m,a)\in\mathcal{M}\times\mathcal{A}$ she communicates to Bob. She sends $c=0$ to Bob when $(m,a)\in\{(1,1),(2,1),(3,0),(3,1)\}$, else she sends $c=1$.\\
{\bf Bob's action:}\\
$\bullet$ The communicated bit from Alice is used as input in Bob's part of the NS box.\\
$\bullet$ Depending on the tuple $(c,b)\in C\times\mathcal{B}$ he chooses the box as follows: $(0,0)\mapsto1,~(0,1)\mapsto2,~(1,0)\mapsto3,~(1,1)\mapsto4$.

The above strategy with $1$-bit communication and $2$-input-$2$-output Hardy's correlation $\mathbb{H}\equiv\{h(a,b|x,y\}$ leads to the strategy matrix,
\begin{align}
S_{\mathbb{H}}	
\equiv
\begingroup
\setlength{\tabcolsep}{1pt} 
\renewcommand{\arraystretch}{1} 
\begin{array}{c||c|c|c|c|} 
 \mathcal{M}\textbackslash\mathcal{Z}& 1 & 2 & 3 & 4\\ \hline\hline
1& h_8 & h_{12} & h_1 & 0\\ \hline
2& 0 & h_{14} & 0 & h_7\\ \hline
3& h_0+h_8 & h_4+h_{12} & 0 & 0\\ \hline
4& 0 & 0 & h_{11} & h_7+h_{15}\\ \hline
\end{array}
\endgroup
\label{HardyChannel}
\end{align}
As evident from the payoff matrix (\ref{DMH-payoff}) and the strategy (\ref{HardyChannel}), a box containing bomb will never be opened. Furthermore, assuming Charlie's choice to be completely random, we have the average payoff 
\begin{align}\label{pay}
\mathcal{P}(S_{\mathbb{H}})=\frac{1}{4}\tr[ \mathcal{G}^{T}_{DMH}~S_{\mathbb{H}}]=\frac{1}{4}h_0>0.
\end{align}
This completes the proof. 
\end{proof}
This in turn establishes that the $4$-input $4$-output noisy channel $S_{\mathbb{H}}$ can be perfectly simulated by the $2$-dimensional identity channel ({\it i.e.} a $1$-bit perfect channel from Alice to Bob) when assisted with Hardy's nonlocal correlation. However, as follows form Theorem \ref{theo1}, assistance of arbitrary amount of shared randomness fails to achieve the goal. A correlation with Hardy success $h_0$ yields Clauser-Horne-Shimony-Holt (CHSH) \cite{Clauser1969(1)} value $2+4~h_0$ \cite{Cereceda2000}. Accordingly, the CHSH value corresponding to the optimal quantum Hardy correlation is strictly less than the Cirel'son bound \cite{Rabelo2012, Cirelson1980}. Therefore, a natural question is whether other nonlocal quantum correlations can lead to better success in the DMH game. Our next result answers this question in negation.     
\begin{theorem}\label{theo3}
Any $2$-input-$2$-output NS correlation providing a strictly positive payoff in the DMH game as an assistance to the $1$-bit of perfect classical channel must exhibit Hardy's nonlocality.
\end{theorem}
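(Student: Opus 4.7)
The plan is to examine all strategies of the form in Fig.~\ref{fig2} that yield strictly positive payoff and to read off Hardy's conditions directly from the constraints that the payoff matrix imposes on the NS correlation $\mathbb{P}$.

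I would parametrize a generic strategy by four deterministic functions: Alice's box-input $X:\mathcal{M}\to\{0,1\}$, her encoding $C:\mathcal{M}\times\mathcal{A}\to\{0,1\}$, Bob's box-input $Y:\{0,1\}\to\{0,1\}$, and his decoding $Z:\{0,1\}\times\mathcal{B}\to\mathcal{Z}$. (Any auxiliary shared randomness in the strategy can be absorbed into $\mathbb{P}$ itself, so by linearity of the payoff it is enough to consider deterministic $X,C,Y,Z$.) The resulting strategy matrix has entries
\[
s_{mz}=\sum_{c,a,b}[C(a,m)=c]\,[Z(c,b)=z]\,p(a,b|X(m),Y(c)),
\]
and inspecting $\mathcal{G}_{DMH}$ the expected payoff reduces to $\tfrac14(s_{3,1}-s_{1,1})$ provided the seven $-\infty$ entries $s_{1,4}$, $s_{2,1}$, $s_{2,3}$, $s_{3,3}$, $s_{3,4}$, $s_{4,1}$, $s_{4,2}$ all vanish. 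A strictly positive payoff therefore forces these seven vanishing conditions together with $s_{3,1}>0$.

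Because each $s_{mz}$ is a non-negative linear combination of probabilities $p(a,b|x,y)$, every bomb-avoidance equation $s_{mz}=0$ is equivalent to the simultaneous vanishing of each of its summand probabilities. I would then organize the case analysis by first fixing canonical choices for $Y$ and for the image of $X(3)$ via the relabeling symmetries of the $2$-input-$2$-output NS polytope (input flips $x\to x\oplus\alpha$, $y\to y\oplus\beta$ and input-conditioned output flips $a\to a\oplus f(x)$, $b\to b\oplus g(y)$); since Hardy's conditions of Eq.~(\ref{Hardy}) are preserved by these relabelings, it suffices to match the surviving constraints with Eq.~(\ref{Hardy}) in each canonical subcase. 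In every such subcase the seven zero-constraints, combined with the no-signalling marginal identities, translate into the vanishing of three designated joint outcomes at three of the four input pairs, while $s_{3,1}>0$ forces a fourth designated outcome at the remaining input pair to be strictly positive. This quadruple of conditions is precisely Hardy's paradox.

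The main obstacle is the bookkeeping: there are many candidate quadruples $(X,C,Y,Z)$, but most are eliminated immediately because they either make some $-\infty$ entry non-zero independently of $\mathbb{P}$ (triggering a bomb) or propagate conflicting constraints that force $s_{3,1}=0$. Careful use of the relabeling symmetry and of no-signalling cuts the enumeration down to a handful of genuinely distinct cases, each of which, once the surviving probability constraints are written out, is directly recognized as Hardy's nonlocal correlation.
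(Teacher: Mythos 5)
Your proposal is correct and follows essentially the same route as the paper: reduce to the deterministic vertex strategies $(X,C,Y,Z)$ of the correlation-assisted polytope, use the non-negativity of the summands in $s_{mz}=\sum\delta_{\cdots}\,p(a,b|x,y)$ to turn each bomb-avoidance constraint $s_{mz}=0$ into termwise vanishing of probabilities, and check that the surviving constraints together with $s_{3,1}>s_{1,1}$ are exactly Hardy's conditions up to local reversible relabeling. The only difference is how the enumeration over the $2^4\times2^8\times2^2\times4^4$ deterministic strategies is discharged --- the paper does it by exhaustive symbolic computation, whereas you propose to quotient by the relabeling symmetries of the NS polytope and finish the handful of remaining cases by hand; the substance of the argument is the same.
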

\begin{proof}
The set of strategy matrices simulable by $1$-cbit communication with the assistance of a given NS correlation $\mathbb{P}\equiv\{p(a,b|x,y)~|~a,b,x,y\in\{0,1\}\}$ and unlimited SR forms a polytope. We denote this set by $\Omega_{1_c+SR+P}$. Vertices of this polytope correspond to strategies where the players follow an encoding and decoding scheme characterized by deterministic functions of the form $x=X(m),~ c=C(m,a),~ y=Y(c),~ z=Z(c,b)$. The linearity of the payoff function again implies that the maximum payoff occurs at one of the vertices of $\Omega_{1_c+SR+P}$.

Elementary counting shows that there are $2^4\times 2^8 \times 2^2 \times 4^4$ such deterministic strategies. Furthermore, elements of the strategy matrix are related linearly to the NS correlation $P=\{p(a,b|x,y)\}$,
\begin{align}\label{lin}
s_{mz}&=\sum_{x,a,c,b,y\in\{0,1\}} \delta_{x,X(m)}\times\delta_{c,C(m,a)}\times\delta_{y,Y(c)}\nonumber\\
&\hspace{3.5cm}\times\delta_{z,Z(c,b)}\times p(a,b|x,y).
\end{align}
For a non-negative payoff, the game matrix enforces some of the entries of $S\equiv(s_{mz})$ to be zero. Since Eq.(\ref{lin}) is linear, one can solve these equality constraints to see what restrictions are imposed on the NS correlation $P(a,b|x,y)$. By brute-forcing through all the deterministic strategies and by symbolic programming, we were able to verify that for each vertex of $\Omega_{1_c+SR+P}$, the positivity of average payoff imposes the conditions in Eq.(3) in the main manuscript (or its local reversible relabelling) to the NS correlation $\{p(a,b|x,y)\}$. This proves the claim that among all $2$-input-$2$-output correlations, only Hardy's nonlocal correlations can provide a positive payoff in the DMH game.
\end{proof}
Although it is known that a two-qubit maximally entangled state does not exhibit Hardy's nonlocality \cite{Goldstein1994,Jordan1994}, still Theorem \ref{theo3} is not sufficient to make a claim that such a state shared between Alice and Bob cannot lead to a strategy yielding strictly positive payoff in DMH game. Increasing the cardinality of the input-output sets $\mathcal{X},~\mathcal{Y},~\mathcal{A},~\mathcal{B}$ Alice and Bob can generate a more general NS correlation and then try to utilize this correlation to assist the $1$-cbit channel to obtain a nonzero payoff in DMH game. However, our next result proves a no-go to this aim. 
\begin{theorem}\label{theo4}
Two qubit maximally entangled state together with $1$-bit perfect classical channel from Alice to Bob does not result in a strategy ensuring a strictly positive average payoff in the DMH game.  
\end{theorem}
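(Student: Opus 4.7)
The plan is to argue by contradiction, supposing some $1$-cbit strategy assisted by the two-qubit maximally entangled state $|\Phi^+\rangle = (|00\rangle+|11\rangle)/\sqrt{2}$ achieves strictly positive payoff. The first step is to put any such strategy into canonical POVM form. Alice's choice $x = X(m)$ together with her coarse-graining $c = C(m,a)$ combine, after Naimark dilation, into a single binary-outcome POVM $\{E_{m,0},E_{m,1}\}$ on her qubit per message $m$, and Bob's $y=Y(c),\, z=Z(c,b)$ combine into a four-outcome POVM $\{F_{c,z}\}_{z=1}^{4}$ on his qubit per received bit $c$. Using the steering identity $\operatorname{Tr}_A[(E\otimes I)|\Phi^+\rangle\langle\Phi^+|] = \tfrac{1}{2}E^T$ valid for $|\Phi^+\rangle$, the induced strategy matrix reads
\begin{equation}
s_{m,z}\,=\,\tfrac{1}{2}\sum_{c\in\{0,1\}}\operatorname{Tr}\!\bigl[E_{m,c}^T F_{c,z}\bigr],
\end{equation}
and, since the payoff is linear in $S$, one may further assume the strategy is deterministic (no additional shared randomness).

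Next I would translate the bomb-avoidance requirement into operator-orthogonality relations. A finite-payoff strategy must annihilate every $-\infty$-labelled entry of $\mathcal{G}_{\mathrm{DMH}}$; because each such $s_{m,z}$ is a sum of two non-negative traces of products of PSD operators, each summand must vanish individually, yielding $\operatorname{supp}(E_{m,c}^T)\perp\operatorname{supp}(F_{c,z})$ for every bomb cell $(m,z)$ and both $c\in\{0,1\}$. Since $\mathcal{G}_{\mathrm{DMH}}$ has only two nonzero finite entries, the payoff reduces to $\mathcal{P}=\tfrac{1}{4}(s_{3,1}-s_{1,1})$, so positivity amounts to $s_{3,1}>s_{1,1}$.

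The core of the proof is a structured case analysis exploiting the rigidity of $\mathbb{C}^2$: every nonzero PSD operator on a qubit has rank $1$ or $2$, and orthogonality of two nonzero supports forces both to be rank-one with complementary one-dimensional ranges. I would branch on the rank patterns of $\{E_{2,c}\}_c$ and $\{E_{4,c}\}_c$, POVMs that enter the problem only through bomb constraints; each configuration forces specific elements of $\{F_{0,z}\}_z$ and $\{F_{1,z}\}_z$ either to vanish or to be pinned to a particular rank-one line. Propagating these restrictions through the $m\in\{1,3\}$ bomb constraints then pins down $\{E_{1,c}\}_c$ and $\{E_{3,c}\}_c$, and in every branch a short computation establishes
\begin{equation}
\sum_{c}\operatorname{Tr}\!\bigl[E_{3,c}^T F_{c,1}\bigr]\,\le\,\sum_{c}\operatorname{Tr}\!\bigl[E_{1,c}^T F_{c,1}\bigr],
\end{equation}
i.e.\ $s_{3,1}\le s_{1,1}$, contradicting the assumption $\mathcal{P}>0$.

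The main obstacle will be the combinatorics of the case split: several rank profiles must be checked and the bomb constraints couple Alice's and Bob's POVMs across both values of $c$. The saving feature is the two-dimensionality of the qubit, which leaves only a few free parameters per branch, so the inequality can be verified by a short direct calculation. As a sanity check, restricting Bob to binary outcomes reduces the scenario to a $2$-input-$2$-output correlation, where Theorem~\ref{theo3} together with the Goldstein--Jordan result \cite{Goldstein1994,Jordan1994} that $|\Phi^+\rangle$ does not exhibit Hardy's nonlocality already rules out positive payoff; the case analysis extends this conclusion to arbitrary POVMs on Bob's qubit.
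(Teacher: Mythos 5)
Your plan is correct and follows essentially the same route as the paper's proof: reduce the strategy to the form $s_{mz}=\tfrac12\sum_c\Tr[E^{(m)}_c N^{*(c)}_z]$ via the steering/transpose identity for $\ket{\phi^+}$, convert the $-\infty$ cells into term-by-term support-orthogonality conditions, note that positivity of the payoff is exactly $s_{31}>s_{11}$, and then use the fact that orthogonal nonzero supports in $\mathbb{C}^2$ force rank-one structure to propagate the constraints until either Alice's $m=1$ and $m=3$ measurements coincide or $N^{(0)}_1=N^{(1)}_1=0$, both of which kill the inequality. The only difference is organizational (you branch on rank profiles of $E^{(2)}_c,E^{(4)}_c$ where the paper first reduces Alice's POVMs to projective ones by a spectral argument), and the case analysis you defer is precisely the computation the paper carries out explicitly.
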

\begin{proof}
Let Alice and Bob share the two-qubit maximally entangled state $\ket{\phi^+}=\frac{1}{\sqrt{2}}\left(\ket{00}+\ket{11}\right)$.
Note that the existence of a strategy involving SR that yields an advantage in winning the game necessarily implies the existence of a strategy without involving any SR. We now prove that such a strategy does not exist. Without loss of generality, we can assume that Alice does a $2$ outcome measurement depending on the classical message $m$ she receives, {\it i.e.}, she performs $\{E^{(m)}_0,E^{(m)}_1~|~E^{(m)}_0+E^{(m)}_1=\mathbb{I}\}$ for $m\in\{1,\cdots,4\}$. Alice then communicates $c\in\{0,1\}$ if the outcome $E^{(m)}_c$ clicks. Based on the communicated bit $c$, Bob performs a $4$ outcome measurement, and based on the measurement outcome, he chooses a box. Bob's measurement is denoted by $\{N^{(c)}_z\}_{z=1}^4$ with $\sum_{z=1}^4 N^{(c)}_z=\mathbb{I}$ when the communication $c\in\{0,1\}$ is received. This leads to the strategy matrix with elements
\begin{align}
s_{mz}&=\sum_{c=0}^1 s_{mzc}:=\sum_{c=0}^1 \Tr\left[\ketbra{\phi^+}{\phi^+}\left(E^{(m)}_c\otimes N^{(c)}_z\right)\right],\nonumber\\
&=\frac{1}{2}\sum_{c=0}^1 \Tr\left[E^{(m)}_c N^{*(c)}_z\right]. \label{phistrat}
\end{align}
The aim is to find POVM elements $E^{(m)}_c$ and $N^{*(c)}_z$ such that the resulting strategy yields a positive payoff in the DMH game. For simplicity, we drop the notation `$*$' in $N^{*(c)}_z$ keeping in mind that if we do indeed find a solution for Eq.(\ref{phistrat}), we need to complex conjugate the matrices $N^{(c)}_z$. We also note that if $N^{(c)}_z$ forms a measurement so does $N^{*(c)}_z$. Therefore a strictly positive 
payoff in the DMH game demands the following conditions to be satisfied: 
\begin{subequations}
\begin{align}
 \Tr\left[E^{(1)}_0 N^{(0)}_4\right]=0=\Tr\left[E^{(1)}_1 N^{(1)}_4\right],\label{a}\\ 
\Tr\left[E^{(2)}_0 N^{(0)}_1\right]=0=\Tr\left[E^{(2)}_1 N^{(1)}_1\right],\label{b}\\
\Tr\left[E^{(2)}_0 N^{(0)}_3\right]=0=\Tr\left[E^{(2)}_1 N^{(1)}_3\right],\label{c}\\
\Tr\left[E^{(3)}_0 N^{(0)}_3\right]=0=\Tr\left[E^{(3)}_1 N^{(1)}_3\right],\label{d}\\
\Tr\left[E^{(3)}_0 N^{(0)}_4\right]=0=\Tr\left[E^{(3)}_1 N^{(1)}_4\right],\label{e}\\
\Tr\left[E^{(4)}_0 N^{(0)}_1\right]=0=\Tr\left[E^{(4)}_1 N^{(1)}_1\right],\label{f}\\
\Tr\left[E^{(4)}_0 N^{(0)}_2\right]=0=\Tr\left[E^{(4)}_1 N^{(1)}_2\right],\label{g}
\end{align}
\label{eq}
\vspace{-.8cm}
\end{subequations}
\begin{align}
\sum_{c=0}^1 \Tr\left[E^{(3)}_c N^{(c)}_1\right]>\sum_{c=0}^1 \Tr\left[E^{(1)}_c N^{(c)}_1\right].\label{ineq}
\end{align}
The inequality (\ref{ineq}) can further be written as 
\begin{align}
 \Tr\left[E^{(3)}_0\left(N^{(0)}_1-N^{(1)}_1\right)\right]>\Tr\left[E^{(1)}_0\left(N^{(0)}_1-N^{(1)}_1\right)\right].\label{ineq2}
\end{align}
Note that, if a solution of POVMs exists satisfying (\ref{eq}) and (\ref{ineq2}), then there also exists a solution with  $\{E^{(m)}_0, E^{(m)}_1\}$ being projective measurements $\forall~m\in\{1,\cdots,4\}$. This can be argued easily by expanding all the operators $\{E^{(m)}_0,E^{(m)}_1\}$ in the spectral form. For example in inequality (\ref{ineq2}) on the LHS, we can choose the projector formed by the eigenvector of $E^{(3)}_0$ giving the maximum value of the trace. Similarly, on the RHS, we can choose the projector formed by the eigenvector of $E^{(1)}_0$ giving the lowest value of trace. Moreover, notice than for Eqs.(\ref{eq}) all the projectors corresponding to different eigenvalues of  $\{E^{(m)}_0,E^{(m)}_1\}$ must satisfy the Eqs.(\ref{eq}) individually. Thus we start by assuming all measurements$\{E^{(m)}_0,E^{(m)}_1\}$ are projective. In particular, let
\begin{align*}
E^{(3)}_0=\ketbra{\psi}{\psi},~~~~E^{(3)}_1=\ketbra{\psi^{\perp}}{\psi^{\perp}},\\
E^{(4)}_0=\ketbra{\phi}{\phi},~~~~E^{(4)}_1=\ketbra{\phi^{\perp}}{\phi^{\perp}}.
\end{align*}
Thus looking into LHS of Eqs.(\ref{d}) \& (\ref{g}) we have
\begin{align*}
N^{(0)}_1=p_1\ketbra{\phi^{\perp}}{\phi^{\perp}},~~~~N^{(0)}_2=p_2\ketbra{\phi^{\perp}}{\phi^{\perp}},\\
N^{(0)}_3=p_3\ketbra{\psi^{\perp}}{\psi^{\perp}},~~~~N^{(0)}_4=p_4\ketbra{\psi^{\perp}}{\psi^{\perp}},   
\end{align*}
with $0\leq p_1,\cdots,p_4\leq 1$. Since $\{N^{(0)}_1,\cdots,N^{(0)}_4\}$ form a measurement, we therefore have 
\begin{align*}
\left(p_1+p_2\right)\ketbra{\phi^{\perp}}{\phi^{\perp}}+\left(p_3+p_4\right)\ketbra{\psi^{\perp}}{\psi^{\perp}}=\mathbb{I},
\end{align*}
which will be satisfied {\it if and only if} 
\begin{align*}
p_1+p_2=p_3+p_4=1,~~\&~~\ket{\phi}=\ket{\psi^{\perp}}. 
\end{align*}
A similar argument can be made for the other measurement of Bob and we get the following
\begin{subequations}
\begin{align}
N^{(0)}_1&=p_1\ketbra{\psi}{\psi},~~~~~~~~~N^{(1)}_1=q_1\ketbra{\psi^{\perp}}{\psi^{\perp}},\label{aa}\\
N^{(0)}_2&=p_2\ketbra{\psi}{\psi},~~~~~~~~~N^{(1)}_2=q_2\ketbra{\psi^{\perp}}{\psi^{\perp}},\label{bb}\\
N^{(0)}_3&=p_3\ketbra{\psi^{\perp}}{\psi^{\perp}},~~~~N^{(1)}_3=q_3\ketbra{\psi}{\psi},\label{cc}\\
N^{(0)}_4&=p_4\ketbra{\psi^{\perp}}{\psi^{\perp}},~~~~N^{(1)}_4=q_4\ketbra{\psi}{\psi},\label{dd}
\end{align}
\end{subequations}
where $0\leq q_1,\cdots,q_4\leq 1$ and $q_1+q_2=q_3+q_4=1$. Now let $E^{(1)}_0=\ketbra{\chi}{\chi}$ and $E^{(1)}_1=\ketbra{\chi^{\perp}}{\chi^{\perp}}$. From LHS of Eq.(\ref{a}) and LHS of Eq.(\ref{dd}) we must have $p_4\ketbra{\psi^{\perp}}{\psi^{\perp}}=s_1\ketbra{\chi^{\perp}}{\chi^{\perp}}$, with $0\leq s_1\leq 1$. A solution is $p_4=s_1>0$ and $\ket{\psi}=\ket{\chi}$ which implies  $E^{(1)}_0=E^{(3)}_0$ and $E^{(1)}_1=E^{(3)}_1$. This further tells us that Alice uses the same strategy for $m=1$ and $m=3$ and this will never yield a positive payoff.  The only other solution is  
\begin{align*}
p_4=s_1=0\implies N^{(0)}_4=0 \implies N^{(0)}_3=\ketbra{\psi^{\perp}}{\psi^{\perp}}.
\end{align*}
Similarly, we can argue that we must have $N^{(1)}_3=\ketbra{\psi}{\psi}$. Now from Eq.(\ref{c}) we have $E^{(2)}_0\propto \ketbra{\psi}{\psi}$ and $E^{(2)}_1\propto \ketbra{\psi^{\perp}}{\psi^{\perp}} \implies E^{(2)}_0=\ketbra{\psi}{\psi}$ and $E^{(2)}_1=\ketbra{\psi^{\perp}}{\psi^{\perp}}$. From Eq.(\ref{b}) and Eq.(\ref{aa}), the only solution is $N^{(0)}_1=N^{(1)}_1=0$, which leads to violation of the inequality in Eq.(\ref{ineq}). Thus a consistent solution cannot be found satisfying the conditions (\ref{eq}) \&~(\ref{ineq}). This completes the proof.
\end{proof}
This theorem has an interesting implication. It shows that there exists a communication task wherein a non-maximally pure entangled state can be preferable over the maximally entangled one even when the entanglement of the former is vanishingly zero. More formally we can deduce the following corollary.
\begin{corollary}\label{coro1}
For every non maximally entangled state $\ket{\psi}\in\mathbb{C}^2\otimes\mathbb{C}^2$ there exists a strategy matrix $S_{\psi}$ such that $S_{\psi}\in\Omega_{1_c+SR+\ket{\psi}}(4,4)$ but $S_{\psi}\notin\Omega_{1_c+SR+\ket{\phi^+}}(4,4)$. 
\end{corollary}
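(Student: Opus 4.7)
The plan is to reduce Corollary~\ref{coro1} to Theorems~\ref{theo2} and~\ref{theo4}, using the classical fact that every non-maximally entangled pure two-qubit state admits local projective measurements realising a Hardy-type correlation. The substantive work has already been done in the previous theorems; the corollary simply packages them together.

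First, I would invoke Hardy's original result~\cite{Hardy1992}: for every $\ket{\psi}\in\mathbb{C}^2\otimes\mathbb{C}^2$ with unequal Schmidt coefficients, there exist local projective measurements $\{\Pi^A_{a|x}\}_{a,x}$ and $\{\Pi^B_{b|y}\}_{b,y}$ (with $a,b,x,y\in\{0,1\}$) such that the induced correlation
\begin{align*}
h(a,b|x,y)=\tr\!\left[\left(\Pi^A_{a|x}\otimes\Pi^B_{b|y}\right)\ketbra{\psi}{\psi}\right]
\end{align*}
satisfies the Hardy constraints (\ref{Hardy}) with Hardy success $h_0>0$. The critical feature, which is what makes the corollary strong, is that $h_0$ remains strictly positive for \emph{every} non-maximally entangled $\ket{\psi}$, including states arbitrarily close to a product state (with, of course, $h_0\to 0$ in that limit).

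Next, I would plug this correlation into the encoding/decoding protocol constructed in the proof of Theorem~\ref{theo2}. This yields a strategy matrix $S_\psi$ of the form (\ref{HardyChannel}) with $\mathcal{P}(S_\psi)=h_0/4>0$, and by construction $S_\psi\in\Omega_{1_c+SR+\ket{\psi}}(4,4)$. Finally, Theorem~\ref{theo4} states that every strategy simulable with $\ket{\phi^+}$ and a single cbit yields DMH payoff at most zero; since the payoff is linear in the strategy matrix (see Eq.~(\ref{pay})) and $\mathcal{P}(S_\psi)>0$, we conclude $S_\psi\notin\Omega_{1_c+SR+\ket{\phi^+}}(4,4)$.

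The only genuinely non-routine ingredient is the Hardy realisation result for the \emph{entire} family of non-maximally entangled two-qubit states, including the weakly entangled regime where $h_0$ becomes tiny but still strictly positive. Once this existence statement is quoted from the literature, the remainder of the argument is an immediate composition of Theorems~\ref{theo2} and~\ref{theo4} with the linearity of the payoff functional.
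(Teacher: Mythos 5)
Your proposal is correct and follows essentially the same route as the paper: it combines the fact that every non-maximally entangled two-qubit pure state admits a Hardy correlation (the paper cites Goldstein~\cite{Goldstein1994} for this, rather than Hardy's original 1992 paper) with the construction of Theorem~\ref{theo2} to place $S_\psi$ in $\Omega_{1_c+SR+\ket{\psi}}(4,4)$, and with the no-go of Theorem~\ref{theo4} to exclude it from $\Omega_{1_c+SR+\ket{\phi^+}}(4,4)$. No gaps.
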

Here $\ket{\phi^+}:=(\ket{00}+\ket{11})/\sqrt{2}$ and $\Omega_{1_c+SR+\ket{\chi}}(4,4)$ denotes the convex set of strategy matrices simulable with $1$-cbit communication from Alice to Bob when the communication channel is further assisted with preshared quantum state $\ket{\chi}$ and unlimited shared randomness. The Corollary \ref{coro1} follows when results of Theorems \ref{theo2} \& \ref{theo4} are combined with the fact that all non maximally pure entangled state exhibits Hardy's nonlocality\cite{Goldstein1994}. More precisely, for every non maximally entangled state $\ket{\psi}\in\mathbb{C}^2\otimes\mathbb{C}^2$ there exists a noisy channel of the form $S_{\mathbb{H}}$, that can be perfectly simulated with $1$-bit perfect classical channel when assisted with the state $\ket{\psi}$, but not with $\ket{\phi^+}$. A more detailed discussion on the implications of theses results is presented in Appendix-\ref{appen-b}.

\section{Discussions and outlook} \label{discussion}
The seminal quantum superdense coding protocol is worth mentioning as it shows that quantum entanglement, pre-shared between a sender and a receiver, can increase the classical communication capacity of a quantum system \cite{Bennett1992} (see also \cite{Bennett1999}). While in quantum superdense coding protocol a quantum channel is considered, here we show that quantum entanglement can even empower the communication utility of a perfect classical channel. As already mentioned, such an advantage can be better understood in zero-error and reverse zero-error communication set-up.  While such an advantage of quantum entanglement is already known (see Proposition $21$ in \cite{Cubitt2011} and see also \cite{Cubitt2010,Frenkel2022,Patra2022}), the full picture of entanglement assistance is not well understood. Our Theorem \ref{theo4} proves a nontrivial result to this direction. It shows that in the correlation-assisted reverse zero-error coding scenario, there exists noisy channel simulation tasks wherein non-maximally entangled states, with arbitrarily less amount of entanglement, are preferable over maximally entangled state. In the resource theory of quantum entanglement \cite{Plenio2007}, where local operations and classical communication (LOCC) are considered to be free, a maximally entangled state is more useful than a non-maximally entangled one, and a deterministic LOCC transformation is always possible from the former to the latter \cite{Nielsen1999}. Our work, however, establishes that within the operational paradigm of local operations and {\it limited} classical communication the structure of entangled resources are quite complex to characterize.
 
A nonlocal advantage is also known in a variant of the communication scenario known as the communication complexity problem \cite{Buhrman2010}. In such a scenario, Bob's goal is not to determine Alice's data $\mathcal{M}$ but to determine some information that is a function of $\mathcal{M}$ in a way that may depend on the other data $\mathcal{N}$ that resides with Bob while $\mathcal{N}$ is unknown to Alice. In that sense, our scenario is 
closer to the standard framework of Shannon \cite{Shannon1948} (and considered by Holevo in quantum set up \cite{Holevo1973}), where at Bob's end, no further data set $\mathcal{N}$ is considered, albeit in single-shot setup.

In conclusion, the present work establishes exotic uses of quantum entanglement in zero-error information theory \cite{Shannon1956} (see also \cite{Korner1998}) whose motivation arises from the fact that in many real-world critical applications, no errors can be tolerated, and in practice, the communication channel can only be available for a finite number of times. In particular, we show that quantum correlations exhibiting Hardy's nonlocality can empower the communication {\it utility} of a perfect classical communication channel. In Appendix \ref{appen-c} we show that similar results can be obtained by considering the generalization of Hardy's nonlocality argument as proposed by Cabello \cite{Cabello2002}. Our work also motivates many questions for future study. For instance, it would be interesting to see whether any nonlocal correlation can be made useful as a communication resource in the sense discussed here. It will also be interesting to see whether maximally entangled states of higher dimensions provide some advantage in the DMH game. More generally, characterizing the set $\Omega_{n_c+SR+\chi}(|\mathcal{M}|,|\mathcal{Z}|)$ for an arbitrary quantum state $\ket{\chi}$ would be very interesting. 

\begin{acknowledgements}
We thankfully acknowledge discussions with Ashutosh Rai. GLS acknowledges support from the CSIR project 09/0575(15830)/2022-EMR-I. SGN acknowledges support from the CSIR project 09/0575(15951)/2022-EMR-I. SRC acknowledges support from University Grants Commission. MA and MB acknowledge funding from the National Mission in Interdisciplinary Cyber-Physical systems from the Department of Science and Technology through the I-HUB Quantum Technology Foundation (Grant no: I-HUB/PDF/2021-22/008). MB acknowledges support through the research grant of INSPIRE Faculty fellowship from the Department of Science and Technology, Government of India, and the start-up research grant from SERB, Department of Science and Technology (Grant no: SRG/2021/000267).
\end{acknowledgements}

\appendix
\section{Hardy's nonlocality argument}\label{appen-a}
Hardy's argument is a popular method to check Bell nonlocality of a given NS correlation \cite{Hardy1992}. A $2$-input-$2$-output NS correlation $\mathbb{H}\equiv\{h(a,b|x,y)\}$, with $a,b,x,y\in\{0,1\}$ will exhibit Bell nonlocality if they satisfy the constraints 
\begin{subequations}
\begin{align}
h(0,0|0,0)&>0, \label{ha}\\
h(0,1|0,1)&=0, \label{hb}\\
h(1,0|1,0)&=0,\label{hc}\\
h(0,0|1,1)&=0. \label{hd}
\end{align}
\end{subequations}
Recall that `Bell locality` condition demands that such a correlation $\{h(a,b|x,y)\}$ must be factorizable in the form
\begin{align}\label{fact}
h(a,b|x,y)&=\int_{\lambda \in \Lambda} d\lambda \mu(\lambda)p_A(a|x,\lambda)p_B(b|y,\lambda),\\ 
&~~~~~~~~~~~~~~~~~~~~~~\forall~a,b,x,y,\nonumber
\end{align}
where $\lambda \in \Lambda$ is the local 
hidden variable, $\mu(\lambda)$ is the distribution of the hidden variable over $\Lambda$, which according to `freedom of choice' is assumed to be independent of the Alice's and Bob's choices of inputs, {\it i.e.}, $\mu(\lambda|x,y)=\mu(\lambda)$; and $p_{_X}(i|j,\lambda)$ is the probability that party $X$ observes the outcome $i$ given that they have performed the measurement $j$ when the hidden state is $\lambda$. Applying Eq.(\ref{fact}) to Eq.(\ref{ha}) we can say that there exist a set $\Tilde{\Lambda} \subseteq \Lambda$ of nonzero measure w.r.t. $\Lambda$ which can be defined as 
\begin{align*}
\Tilde{\Lambda} \equiv \{\lambda\in\lambda~|~\mu(\lambda)>0,p_A(0|0,\lambda)>0,p_B(0|0,\lambda)>0\}.  
\end{align*}
Now, Eq(\ref{fact}) can be written as
\begin{align}
h(a,b|x,y)=&\int_{\lambda \in \Tilde{\Lambda}} d\lambda \mu(\lambda)p_A(a|x,\lambda)p_B(b|y,\lambda) \nonumber \\
+&\int_{\lambda \in \Tilde{\Lambda}^c} d\lambda \mu(\lambda)p_A(a|x,\lambda)p_B(b|y,\lambda),\label{factc}
\end{align}
where $\Tilde{\Lambda}^c$ is defined as the complement of $\Tilde{\Lambda}$ w.r.t $\Lambda$. It can be noted that since both the terms appearing on the R.H.S. of Eq.(\ref{factc}) are nonnegative, if the L.H.S. of Eq.(\ref{factc}) is $0$ then both the individual terms on the R.H.S. must be $0$. Thus from Eq.(\ref{hb}) we must have
\begin{align}
\int_{\lambda \in \Tilde{\Lambda}} d\lambda \mu(\lambda)p_A(0|0,\lambda)p_B(1|1,\lambda)=0,\nonumber\\
\implies p_B(1|1,\lambda)=0~\forall~\lambda \in \Tilde{\Lambda}.\nonumber\\
\mbox{Since}~p_A(0|0,\lambda)>0~\&~\mu(\lambda)>0~\forall~\lambda \in \Tilde{\Lambda},\nonumber\\
\implies p_B(0|1,\lambda)=1~\forall~\lambda \in \Tilde{\Lambda}.
\end{align}
Similarly from Eq.(\ref{hc}) we must have 
\begin{align}
p_A(0|1,\lambda)=1~\forall~\lambda \in \Tilde{\Lambda}.   
\end{align}
Thus from Eq(\ref{hd}) we have
\begin{align}
&\left(\int_{\lambda \in \Tilde{\Lambda}^c}+\int_{\lambda \in \Tilde{\Lambda}}\right)d\lambda \mu(\lambda)p_A(0|1,\lambda)p_B(0|1,\lambda)=0,\nonumber\\
&\implies\int_{\lambda \in \Tilde{\Lambda}^c} d\lambda \mu(\lambda)p_A(0|1,\lambda)p_B(0|1,\lambda)\nonumber\\
&\hspace{4cm}+\int_{\lambda \in \Tilde{\Lambda}} d\lambda \mu(\lambda)=0.\label{facthd}
\end{align}
This is impossible due to the fact that $\mu(\lambda)>0~\forall~\lambda \in \Tilde{\Lambda}$, which guarantees a strictly positive contribution from the second term on the L.H.S. in Eq(\ref{facthd}). Thus a Bell local NS correlation does not satisfy all the constraints (\ref{ha}-\ref{hd}). Accordingly, a NS correlation satisfying all the constraints (\ref{ha}-\ref{hd}) must be Bell nonlocal, and  such a correlation is called Hardy's Nonlocal correlation.

\section{Analysis and implications of Theorems 3 \& 4 and Corollary 1}\label{appen-b}
In the framework of resource theory, quantum entanglement is considered to be a useful resource under the free operation of `local quantum operations and classical communication' (LOCC). In this operational paradigm, Nielson's result \cite{Nielsen1999} proves that a state $\ket{\psi}$ can be transferred to another state $\ket{\phi}$ uder LOCC {\it if and only if} $\lambda_\psi$ is majorized by $\lambda_\phi$, where $\lambda_i$'s are Schmidt coefficient of the respective states. As it turns out, for the $\mathbb{C}^2\otimes\mathbb{C}^2$ system any non-maximally entangled state $\ket{\psi}\in\mathbb{C}^2\otimes\mathbb{C}^2$ can be deterministically prepared from the maximally entangled state $\ket{\phi^+}$ using $1$-bit classical communication from Alice to Bob, whereas the reverse transformation is not possible. This implies that maximally entangled states $\ket{\phi^+}$ will surpass any non-maximally entangled state $\ket{\psi}$ in all possible tasks if classical communication is considered as a free resource. 
\begin{figure}[t!]
\centering
\includegraphics[width=0.45\textwidth]{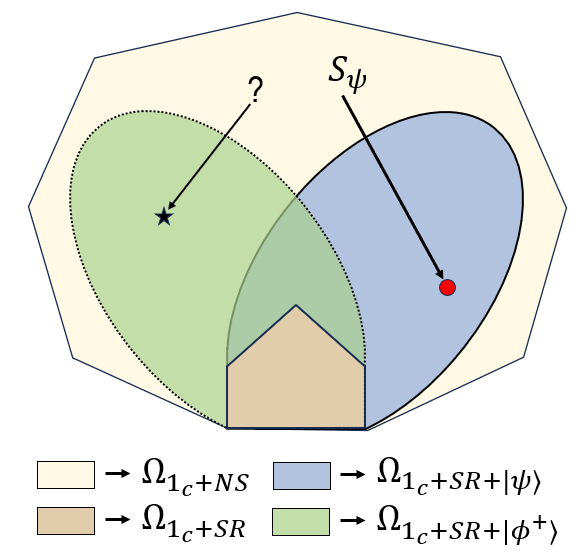}
\caption{Set of $4$-input-$4$-output channels simulable by $1$-bit of classical communication in assistance with different kind of preshared resources. $\Omega_{1_c+SR}(4,4)$ denotes the set of channels simulable by $1$-bit classical communication when unlimited amount of shared randomness is available as assistance. $\Omega_{1_c+NS}(4,4)$ corresponds to the set when communication line is assisted with arbitrary amount of NS correlation. $\Omega_{1_c+SR+\ket{\chi}}(4,4)$ denotes the set of channels simulable when arbitrary amount of SR along with the state $\ket{\chi}\in\mathbb{C}^2\otimes\mathbb{C}^2$ is available as assistance. Two convex sets are depicted for maximally and non-maximally entangled states. Red dot denotes the channel $S_\psi$ such that $S_\psi\in\Omega_{1_c+SR+\ket{\psi}}$ but $S_\psi\notin\Omega_{1_c+SR+\ket{\phi^+}}$. Star denotes a possible channel that lies within $\Omega_{1_c+SR+\ket{\phi^+}}$, but does not belong to the set $\Omega_{1_c+SR+\ket{\psi}}$.}
\label{fig3}
\end{figure}

However, the situation is quite different if classical communication is treated as a costly resource. This becomes quite evident in the reverse zero-error channel simulation task considered in our work. For instance, as discussed in our Corollary 1, there exist a $4$-input-$4$-output noisy channel $S_\psi$ of the form $S_{\mathbb{H}}$ of Eq.(4) of the main manuscript, which can be perfectly simulated by $1$-bit classical communication from Alice to Bob with the assistance of the non-maximally entangled state $\ket{\psi}$. By performing suitable local measurements on the state $\ket{\psi}$ one first obtain a $2$-input-$2$-output NS correlation that exhibits Hardy's nonlocality which according to Theorem 3 is necessary to simulate a channel of the form $S_{\mathbb{H}}$ when communication from Alice to Bob is limited to $1$-cbit. At this point it should be noted that $2$-input-$2$-output correlation obtained from the state $\ket{\phi^+}$ cannot exhibit Hardy's nonlocality \cite{Goldstein1994,Jordan1994}. However, this itself does not discard the possibility of simulating a noisy channel of the form $S_{\mathbb{H}}$ with $1$-cbit channel from Alice to Bob in assistance with $\ket{\phi^+}$. From the state $\ket{\phi^+}$ one can tries to come up with higher input-output NS correlation by performing local POVM on $\ket{\phi^+}$, which can further be used for the targeted simulation task. Our Theorem 4, however, proves that this, in-fact, is not possible. Given the state $\ket{\phi^+}$ one might aim to convert it to the state $\ket{\psi}$ following Nielson's protocol and then try to simulate the channel $S_\psi$. But, this also is not possible since the available $1$-cbit channel is consumed at the state transformation step and hence makes the simulation impossible. Therefore our results establishes a nontrivial advantage of a non-maximally entangle state over the maximally entangle state in the paradigm of limited classical communication scenario even when the non-maximally entangled state contains arbitrarily small amount of entanglement. Of course, there might be possibility of a different $4$-input-$4$-output noisy channel which can be simulated by $1$-cbit channel in assistance of $\ket{\phi^+}$, but not in assistance with $\ket{\psi}$. Although we believe that such channels should exist we could not come up with explicit such examples, and leave the question for future research. The aforesaid discussion is depicted in Fig.\ref{fig3}. 

\section{Advantage of Cabello's Nonlocal Correlation}\label{appen-c}
In the main manuscript, we have shown that Hardy's nonlocal correlation provides a communication advantage in the DMH game. Here we extend this result for nonlocal correlations exhibiting Cabello's nonlocality \cite{Cabello2002} -- which can be thought of as a generalization of Hardy's nonlocality argument. As shown by Cabello a binary input-output NS correlation $\{c(a,b|x,y)\}$ satisfying the constraints
\begin{align}
\left\{\begin{aligned}
c(0,0|0,0):=c_0>c_3,~~c(0,1|0,1):=c_5=0,\\
c(1,0|1,0):=c_{10}=0,~~c(0,0|1,1):=c_3,~~~~~\\
\mbox{with},~c(a,b|x,y):=c_{a\times2^3+b\times2^2+x\times2^1+y\times2^0}
\end{aligned}\right\}\label{Hardyc}
\end{align}
must be nonlocal in nature. To establish the communication advantage of such a correlation we consider a variant of the DMH game which will be denoted as DMH$^\prime$ and specified by the payoff matrix:
\begin{align}
\mathcal{G}_{DMH^\prime}	
\equiv
\begingroup
\setlength{\tabcolsep}{1pt} 
\renewcommand{\arraystretch}{1} 
\begin{array}{c||c|c|c|c|} 
 \mathcal{M}\textbackslash\mathcal{Z}& 1 & 2 & 3 & 4\\ \hline\hline
1& -1 & 0 & 0 & -\infty\\ \hline
2& -\infty & 0 & -1 & 0\\ \hline
3& +1 & 0 & -\infty & -\infty\\ \hline
4& -\infty & -\infty & 0 & 0\\ \hline
\end{array}
\endgroup
\label{DMH(c)-payoff}
\end{align}
Our next result limits the success probability of this game when $1$-cbit communication from Alice to Bob is allowed along with an unlimited amount of shared randomness. 
\begin{theorem}\label{theo5}
The average payoff of the DMH$^\prime$ game is upper bounded by zero while following a strategy from the set $\Omega_{1_c+SR}(4,4)$, {\it i.e.}, $\mathcal{P}(S)\le0,~\forall~S\in\Omega_{1_c+SR}(4,4)$.  
\end{theorem}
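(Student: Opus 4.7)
The plan is to mirror the proof of Theorem \ref{theo1}. Since the average payoff is linear in the strategy matrix [see Eq.(\ref{pay})] and $\Omega_{1_c+SR}(4,4)$ is a polytope, the maximum of $\mathcal{P}(S)$ over $\Omega_{1_c+SR}(4,4)$ is attained at one of its vertices, each of which corresponds to a deterministic encoding $\mathbb{E}:\{1,2,3,4\}\to\{0,1\}$ by Alice together with a deterministic decoding $\mathbb{D}:\{0,1\}\to\{1,2,3,4\}$ by Bob. Any vertex at which some $(m,z)$ with $g_{mz}=-\infty$ is selected with positive probability yields payoff $-\infty$, so we may restrict attention to vertices that avoid all the bomb entries of $\mathcal{G}_{DMH^\prime}$, and it suffices to show that every such surviving vertex satisfies $\mathcal{P}(S^e)\le 0$.

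Reading off the $-\infty$ entries in the matrix (\ref{DMH(c)-payoff}), the admissibility constraints on Bob's decoded box $z=\mathbb{D}(\mathbb{E}(m))$ are $z\neq 4$ for $m=1$, $z\neq 1$ for $m=2$, $z\notin\{3,4\}$ for $m=3$ and $z\notin\{1,2\}$ for $m=4$. The constraints for $m=3$ and $m=4$ are jointly exhaustive, so they force $\mathbb{E}(3)\neq\mathbb{E}(4)$; without loss of generality assume $\mathbb{E}(3)=0$ and $\mathbb{E}(4)=1$. The proof then proceeds by a short case analysis over the four possible assignments of $(\mathbb{E}(1),\mathbb{E}(2))\in\{0,1\}^2$. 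In each case, the admissible box on bit $0$ (resp.\ bit $1$) must simultaneously avoid the bombs of all messages mapped to that bit, which leaves only a handful of choices for Bob's decoder; one then computes $\mathcal{P}(S^e)=\tfrac{1}{4}\sum_m g_{m,\mathbb{D}(\mathbb{E}(m))}$ and maximises over the remaining freedom.

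The only case that is not immediately $0$ is the separating encoding $\mathbb{E}(1)=\mathbb{E}(3)=0$, $\mathbb{E}(2)=\mathbb{E}(4)=1$, since this is the unique encoding that lets Bob exploit the $+1$ reward at $(3,1)$. There, bit $0$ restricts $z\in\{1,2\}$ and bit $1$ restricts $z\in\{3,4\}$, so the best choice is $\mathbb{D}(0)=1$ (giving payoffs $-1,+1$ on $m=1,3$) and $\mathbb{D}(1)=4$ (giving $0,0$ on $m=2,4$), yielding exactly $\mathcal{P}=\tfrac{1}{4}(-1+1+0+0)=0$. The remaining encodings collapse Bob to a single admissible box on at least one bit and contribute payoff $\le 0$ by direct inspection; in particular, whenever Bob is forced to decode bit $1$ to $z=3$ (as happens when $\{1,2,4\}$ share a bit), the $-1$ at $(2,3)$ cannot be compensated by the $+1$ at $(3,1)$ because the $+1$ lives on the other bit. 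This exhausts the admissible vertices and establishes $\mathcal{P}(S)\le 0$ throughout $\Omega_{1_c+SR}(4,4)$.

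The only mild obstacle relative to Theorem \ref{theo1} is that the softening $(2,3)\colon -\infty\mapsto -1$ enlarges the set of bomb-free vertices, so one has to check that the new $+1$--harvesting encoding still caps at $0$ rather than giving $\tfrac{1}{4}$; the calculation above shows the $-1$ at $(2,3)$ is irrelevant and cancellation happens already on the $\{m=1,m=3\}$ side.
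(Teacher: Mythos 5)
Your proof is correct and follows essentially the same route as the paper's: reduce to the deterministic, bomb-free vertices of the polytope $\Omega_{1_c+SR}(4,4)$ and check that each yields payoff at most zero; the paper simply lists the nine surviving strategy matrices explicitly, while you organize the same enumeration by cases on the encoding. One small wording slip: when $\{1,2,4\}$ share a bit, the $+1$ at $(3,1)$ does \emph{exactly} compensate the forced $-1$ at $(2,3)$ (total $0$, not strictly negative), but this does not affect the conclusion $\mathcal{P}(S)\le 0$.
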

\begin{proof}
The proof follows similar reasoning as of Theorem 1 in the main manuscript. The only extreme strategies which will not trigger the bomb are
\begin{align}
\left\{\begin{aligned}
S^e_{s1}:=\begin{pmatrix}
     1 & 0 & 0 & 0\\
     0 & 0 & 1 & 0\\
     1 & 0 & 0 & 0\\
     0 & 0 & 1 & 0
     \end{pmatrix},~~
S^e_{s2}:=\begin{pmatrix}
     0 & 0 & 1 & 0\\
     0 & 0 & 1 & 0\\
     1 & 0 & 0 & 0\\
     0 & 0 & 1 & 0
     \end{pmatrix},\\
S^e_{s3}:=\begin{pmatrix}
     1 & 0 & 0 & 0\\
     0 & 0 & 0 & 1\\
     1 & 0 & 0 & 0\\
     0 & 0 & 0 & 1
     \end{pmatrix},~~
S^e_{s4}:=\begin{pmatrix}
     0 & 1 & 0 & 0\\
     0 & 1 & 0 & 0\\
     0 & 1 & 0 & 0\\
     0 & 0 & 1 & 0
     \end{pmatrix},\\
S^e_{s5}:=\begin{pmatrix}
     0 & 1 & 0 & 0\\
     0 & 0 & 1 & 0\\
     0 & 1 & 0 & 0\\
     0 & 0 & 1 & 0
     \end{pmatrix}~~
     S^e_{s6}:=\begin{pmatrix}
     0 & 0 & 1 & 0\\
     0 & 1 & 0 & 0\\
     0 & 1 & 0 & 0\\
     0 & 0 & 1 & 0
     \end{pmatrix},\\
     S^e_{s7}:=\begin{pmatrix}
     0 & 0 & 1 & 0\\
     0 & 0 & 1 & 0\\
     0 & 1 & 0 & 0\\
     0 & 0 & 1 & 0
     \end{pmatrix}~~
     S^e_{s8}:=\begin{pmatrix}
     0 & 1 & 0 & 0\\
     0 & 1 & 0 & 0\\
     0 & 1 & 0 & 0\\
     0 & 0 & 0 & 1
     \end{pmatrix},\\
     S^e_{s9}:=\begin{pmatrix}
     0 & 1 & 0 & 0\\
     0 & 0 & 0 & 1\\
     0 & 1 & 0 & 0\\
     0 & 0 & 0 & 1
     \end{pmatrix},~~~~~~~~~~~~~~~~~~
\end{aligned}\right\}
\end{align}
As all the extreme strategies result in either zero or negative payoff, the optimal payoff with $1$ bit classical communication and shared randomness is simply zero.
 \end{proof}
Next, we proceed to establish the communication advantage of nonlocal correlation exhibiting Cabello's nonlocality.
\begin{theorem}\label{theo6}
A strictly positive average payoff in the DMH$^\prime$ game is achievable with $1$-bit perfect classical channel from Alice to Bob when the channel is assisted with a $2$-input-$2$-output Cabello's nonlocal correlation.  
\end{theorem}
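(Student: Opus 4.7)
The plan is to mimic the proof of Theorem 2, using the very same encoding--decoding protocol, and then check that the altered payoff matrix $\mathcal{G}_{DMH^\prime}$ of Eq.~(\ref{DMH(c)-payoff}) is precisely tuned so that Cabello's inequality $c_0>c_3$ translates into strict positivity of the expected payoff.

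First I would fix the same strategy as in Theorem 2: Alice sets $x=0$ for $m\in\{1,3\}$ and $x=1$ for $m\in\{2,4\}$; she transmits $c=0$ exactly when $(m,a)\in\{(1,1),(2,1),(3,0),(3,1)\}$ and $c=1$ otherwise; Bob uses $y=c$ and decodes $(c,b)=(0,0),(0,1),(1,0),(1,1)$ into $z=1,2,3,4$ respectively. Plugging a generic $2$-input-$2$-output NS correlation $\mathbb{C}\equiv\{c(a,b|x,y)\}$ into Eq.(\ref{lin}) with the above deterministic functions yields a strategy matrix whose entries are single coefficients $c_k$ (or a two-term marginal sum $c_{k}+c_{k+8}$ whenever Bob's choice of $z$ is independent of $a$, i.e.\ for $m\in\{3,4\}$).

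Next I would read off the matrix explicitly: the relevant bomb positions $(1,4),(2,1),(3,3),(3,4),(4,1),(4,2)$ must vanish. Four of these are automatic from the structure of the strategy (Alice's deterministic communication forces $s_{33}=s_{34}=s_{41}=s_{42}=0$), while $s_{14}=c_5$ and $s_{21}=c_{10}$ vanish by the Cabello constraints (\ref{Hardyc}). The two remaining ``dangerous'' entries are $s_{23}=c_3$ and $s_{43}=c_3+c_{11}$, and this is exactly the point where the DMH$^\prime$ payoff matrix differs from DMH: position $(2,3)$ carries payoff $-1$ rather than $-\infty$, so a nonzero $c_3$ is merely penalised, not forbidden, and position $(4,3)$ carries payoff $0$, so the additional $c_3$ term is harmless. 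Hence the resulting matrix,
\begin{align}
S_{\mathbb{C}}
\equiv
\begingroup
\setlength{\tabcolsep}{1pt}
\renewcommand{\arraystretch}{1}
\begin{array}{c||c|c|c|c|}
\mathcal{M}\textbackslash\mathcal{Z}& 1 & 2 & 3 & 4\\ \hline\hline
1& c_8 & c_{12} & c_1 & 0\\ \hline
2& 0 & c_{14} & c_3 & c_7\\ \hline
3& c_0+c_8 & c_4+c_{12} & 0 & 0\\ \hline
4& 0 & 0 & c_3+c_{11} & c_7+c_{15}\\ \hline
\end{array}
\endgroup
\end{align}
never triggers an $-\infty$ entry of $\mathcal{G}_{DMH^\prime}$.

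Finally I would compute the expected payoff directly from Eq.(\ref{pay}):
\begin{align}
\mathcal{P}(S_{\mathbb{C}})
=\tfrac{1}{4}\bigl[-c_8\;-\;c_3\;+\;(c_0+c_8)\bigr]
=\tfrac{1}{4}\,(c_0-c_3),
\end{align}
which is strictly positive exactly by Cabello's defining inequality $c_0>c_3$ in Eq.(\ref{Hardyc}). The only step that requires care (and is really the main thing to get right) is this bookkeeping: verifying that the single additional nonzero entry $s_{23}=c_3$ introduced when relaxing $h_3=0$ to $c_3\ge 0$ is precisely compensated by the $c_3$ appearing inside $s_{31}=c_0+c_8$ (via the marginal $c_0+c_8$ rather than through a separate term), so that the net contribution to the payoff becomes $c_0-c_3$. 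Once the matrix is written down, the rest is algebraic.
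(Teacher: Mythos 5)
Your proposal is correct and follows essentially the same route as the paper's proof: the identical encoding--decoding strategy from Theorem~2 is applied to a Cabello correlation, the resulting strategy matrix (which you reproduce exactly) avoids every $-\infty$ entry of $\mathcal{G}_{DMH^\prime}$ because the $(2,3)$ position now carries a finite penalty, and the payoff evaluates to $\tfrac{1}{4}(c_0-c_3)>0$. The only slight imprecision is your closing remark about $c_3$ being ``compensated inside $s_{31}=c_0+c_8$'' --- the actual cancellation is between $-c_8$ from $s_{11}$ and $+c_8$ from $s_{31}$, leaving $c_0$ against the uncancelled $-c_3$ from $s_{23}$ --- but your displayed computation already gets this right.
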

\begin{proof}
Once again the proof structure is the same as Theorem 2 of the main manuscript.
{\bf Alice's action:}\\
$\bullet$ Depending on $m\in\mathcal{M}$ Alice computes her input in the NS box. For $m\in\{1,3\}$ she chooses $x=0$, otherwise she choose $x=1$.\\
$\bullet$ Based on the tuple $(m,a)\in\mathcal{M}\times\mathcal{A}$ she communicates to Bob. She sends $c=0$ to Bob when $(m,a)\in\{(1,1),(2,1),(3,0),(3,1)\}$, else she sends $c=1$.\\
{\bf Bob's action:}\\
$\bullet$ The communicated bit from Alice is used as input in Bob's part of the NS box.\\
$\bullet$ Depending on the tuple $(c,b)\in C\times\mathcal{B}$ he chooses the box as follows: $(0,0)\mapsto1,~(0,1)\mapsto2,~(1,0)\mapsto3,~(1,1)\mapsto4$.

The above strategy with $1$-bit communication and $2$-input-$2$-output Cabello's correlation $\{c(ab|xy\}$ leads to the strategy matrix,
\begin{align}
S_H	
\equiv
\begingroup
\setlength{\tabcolsep}{1pt} 
\renewcommand{\arraystretch}{1} 
\begin{array}{c||c|c|c|c|} 
 \mathcal{M}\textbackslash\mathcal{Z}& 1 & 2 & 3 & 4\\ \hline\hline
1& c_8 & c_{12} & c_1 & 0\\ \hline
2& 0 & c_{14} & c_3 & c_7\\ \hline
3& c_0+c_8 & c_4+c_{12} & 0 & 0\\ \hline
4& 0 & 0 & c_3+c_{11} & c_7+c_{15}\\ \hline
\end{array}
\endgroup
\label{CabeloChannel}
\end{align}
As evident from the payoff matrix (\ref{DMH(c)-payoff}) and the strategy (\ref{CabeloChannel}), a box containing bomb will never be opened. Furthermore, assuming Charlie's choice to be completely random, we have the average payoff 
\begin{align}\label{pay1}
\mathcal{P}(S_H)=\frac{1}{4}\tr[ \mathcal{G}^{T}_{DMH^\prime}~S_H]=\frac{1}{4}(c_0-c_3)>0.
\end{align}
This completes the proof. 
\end{proof}

\end{document}